\documentclass[lettersize,journal]{IEEEtran}
\usepackage{amsmath,amsfonts}
\usepackage{amssymb}  
\usepackage{amsthm}
\usepackage{algorithmic}
\usepackage{algorithm}
\usepackage{array}
\usepackage[caption=false,font=normalsize,labelfont=sf,textfont=sf]{subfig}
\usepackage{textcomp}
\usepackage{stfloats}
\usepackage{url}
\usepackage{verbatim}
\usepackage{graphicx}

\usepackage{cite}
\newtheorem{theorem}{Theorem}       
\newtheorem{lemma}{Lemma}           
  
\newtheorem{assumption}{Assumption}

\begin{document}

\title{An Evolutionary Computation Framework for Multi-Agent Q-Learning with Mean-Field Environmental Feedback}

\author{Lichen Wang, Shijia Hua, Linjie Liu
\thanks{L. Wang, S. Hua, L. Liu  are with College of Science, Northwest A \& F University, Yangling, 712100, China.}
\thanks{L. Wang is also with the School of Mathematical Sciences, University of Electronic Science and Technology of China, Chengdu 611731, China.}
\thanks{Corresponding authors: Linjie Liu (email: linjieliu1992@nwafu.edu.cn).}}



\maketitle

\begin{abstract}
	Multi-agent reinforcement learning in networked populations is governed by the interaction between individual adaptation, local encounters, and changing environmental conditions. To study this interaction, we formulate a coupled learning--environment model in which agents update stateless $Q$-values on a fixed graph, while their population-average behavior drives an environmental variable that dynamically modifies the payoff matrix. Under a first-order mean-field closure, we derive a deterministic transport equation for the population distribution of $Q$-values and couple it with a projected discrete update for the environmental state. The resulting model is evaluated against finite-network Monte Carlo simulations on random regular, Erd\H{o}s--R\'enyi, Barab\'asi--Albert, and random geometric graphs. Across the tested parameter ranges, the mean-field system reproduces the main macroscopic cooperation and environmental trajectories, and the trajectory-level root-mean-square error generally decreases with population size and average degree. The analysis further shows that environmental feedback reshapes the learned action-value ordering, while reinforcing feedback can produce pronounced dependence on the initial learning bias and resource level. The environmental timescale also plays an important role: a rapid response can drive the resource state to a boundary before learning adapts, whereas a slower response preserves the interaction between behavioral learning and environmental recovery. These results provide a population-level description of coupled reinforcement learning and environmental dynamics and characterize the performance of the mean-field approximation within the tested network and parameter ranges.
\end{abstract}

\begin{IEEEkeywords}
	Multi-agent reinforcement learning, evolutionary game theory, environmental feedback, mean-field theory, transport equation.
\end{IEEEkeywords}

\section{Introduction}

Multi-agent reinforcement learning (MARL) studies how multiple adaptive decision makers learn while their rewards and observations depend on the behavior of other simultaneously adapting agents~\cite{pi2025dynamic,zeng2025evolutionary}. Its value-based foundation includes $Q$-learning, which updates action values from experienced rewards, together with exploration rules that translate those values into stochastic policies~\cite{watkins1992technical,kaelbling1996reinforcement}. The multi-agent setting introduces non-stationarity, coordination, and scalability challenges that do not arise in the same form for an isolated learner, and it has consequently been studied from the perspectives of multi-agent systems, reinforcement learning, evolutionary learning, and deep MARL~\cite{stone2000multiagent,busoniu2008comprehensive,zhang2021selective,hernandezleal2019survey,gronauer2021multiagent,zhang2017fmrq,zhang2021collaborative}. When interaction and information exchange are constrained by a graph, an agent responds to a local neighborhood rather than to a well-mixed population, so the collective learning process can depend on connectivity, degree heterogeneity, and local sampling~\cite{zhang2018networked}. These features motivate a population-level theory that remains connected to the microscopic learning rule while making explicit which network effects are retained and which are removed by approximation.

Evolutionary game theory provides a complementary language for population adaptation. Its classical foundations characterize evolutionarily stable strategies and the dynamics induced by payoff differences~\cite{wei2024moral,zhu2023equilibrium,zhu2022networked,zhu2022nash,smith1973logic,smith1982evolution,taylor1978evolutionary,hofbauer2003evolutionary,nowak2004evolutionary}. Subsequent work identified general mechanisms supporting cooperation and clarified why temporal and spatial structure can lead to behavior beyond a well-mixed replicator description~\cite{nowak2006five,roca2009beyond}. A substantial literature has also connected multi-agent learning to evolutionary dynamics: reinforcement and related adaptive rules can generate selection--mutation or replicator-like equations under suitable approximations~\cite{tuyls2005evolutionary,tuyls2007evolutionary,sato2003coupled}. Nevertheless, a frequency-based replicator equation and a value-based reinforcement learner represent different state variables. Replicator dynamics directly evolve strategy shares, whereas $Q$-learning retains action-specific estimates of accumulated reward and uses exploration to convert those estimates into behavior.

The interaction graph adds another level of complexity. Graph topology can change the conditions under which cooperation emerges, and heterogeneous degree, update asymmetry, and structured mixing can alter both transient and long-run outcomes~\cite{ohtsuki2006simple,santos2006evolutionary,nowak2009evolutionary,szabo2007evolutionary,perc2017statistical}. Studies on social diversity, asymmetric interaction and replacement processes, and interdependent networks show that structural features can alter evolutionary outcomes beyond what is predicted by homogeneous population averages~\cite{santos2008social,ohtsuki2007breaking,wang2013interdependent,ren2024reputation,wei2025moral,hou2024multiagent}. Extensions to multiplayer, edge-heterogeneous, and higher-order interactions reinforce the same conclusion~\cite{pena2016multiplayer,su2019edgediversity,battiston2020beyond}. A first-order mean-field closure provides a tractable population-level description by replacing local interaction statistics with aggregate quantities. Because this approximation omits local correlations, we evaluate its accuracy across population sizes, neighborhood sizes, and network topologies in Section III.

This population-level treatment becomes especially important when collective behavior changes the environment that shapes subsequent rewards. In social--ecological and common-resource systems, cooperative behavior may improve a shared environmental state, while depletion or recovery can change the relative incentives for subsequent actions. Evolutionary game models with environmental feedback have shown that this bidirectional coupling can generate oscillations, resource collapse, distinct long-run outcomes, and limit cycles~\cite{weitz2016oscillating,tilman2020evolutionary,chen2018punishment,gong2022limit,gavin2026learning,liang2025coevolution}. Recent work has further classified broad families of two-strategy games with environmental feedback and studied asymmetric behavioral effects on the environment~\cite{ito2024complete,shao2019asymmetrical}. These studies establish the importance of feedback and timescale separation, while the broader eco-evolutionary literature emphasizes that environmental change and behavioral adaptation can recursively reshape one another~\cite{govaert2019feedbacks}. Most analytical models in this line evolve strategy frequencies through selection, imitation, or prescribed switching. They do not explicitly track how individual learners retain reward history in action values or how softmax action selection translates those values into behavior that subsequently affects environmental change.

For population-level analyses of multi-agent $Q$-learning, Hu et al. derived a Fokker--Planck equation for the evolution of $Q$-value distributions in well-mixed populations playing repeated symmetric games~\cite{hu2019modelling}. They subsequently developed a continuity-equation model for $Q$-learning dynamics in population games~\cite{hu2022dynamics}. For graph-structured populations, Chu et al. formulated a model for regular graphs, while Liu et al. extended this framework to a broader range of network structures, including random and scale-free graphs~\cite{chu2022formal,liu2025formal}. Yuan et al. further investigated $Q$-learning dynamics in networked stochastic games using a pair-approximation approach~\cite{yuan2026dynamics}. Together, these studies provide the methodological foundation for population-level descriptions of graph-based $Q$-learning.
Zhang et al. incorporated environmental feedback into networked multi-agent reinforcement learning to promote cooperation in evolutionary games~\cite{zhang2023environmental}. Their study focuses on the learning-based promotion of cooperation under environmental feedback, whereas the distribution-level models above characterize $Q$-value evolution under exogenous payoff structures. These two directions leave open a population-level formulation in which the distribution of action values and an endogenous environmental state evolve self-consistently.

To address this gap, we formulate a coupled learning--environment model in which agents interact on a fixed graph, select actions through a softmax policy, and update stateless $Q$-values from neighbor-averaged rewards, while the population-average strategy drives a global environmental variable that modifies subsequent payoffs. Under explicitly stated neighbor-independence and population-averaged selection closures, we construct a deterministic first-order transport equation for the population distribution of action values and couple it self-consistently to a projected discrete environmental update. The theoretical formulation allows a finite action set and general environment-dependent payoff and feedback maps; the numerical experiments use a binary-action benchmark, affine payoff interpolation, and linear environmental feedback to isolate interpretable mechanisms. We compare the transport prediction with finite-network Monte Carlo simulations on random regular, Erd\H{o}s--R\'enyi, Barab\'asi--Albert, and random geometric graphs, and quantify the approximation error across population sizes and average degrees.

The main contributions are summarized as follows:
\begin{enumerate}
	\item We formulate a bidirectionally coupled model linking microscopic stateless $Q$-learning on a graph, the population distribution of action values, and an endogenous environmental state.
	
	\item Under explicitly stated mean-field closure assumptions, we derive a deterministic first-order transport equation for the $Q$-value distribution and couple it self-consistently to a projected environmental update defined on the common discrete learning timescale.
	
	\item We quantitatively evaluate the population-level prediction against finite-network simulations across four representative topologies and investigate how environmental response time, feedback strength, policy selectivity, and initial action-value bias shape cooperation and resource dynamics.
\end{enumerate}

The remainder of this paper is organized as follows. Section II defines the networked learning--environment model and derives the coupled transport--environment system. Section III evaluates the approximation against finite-network simulations and examines payoff configurations, finite-size error, environmental response time, policy selectivity, feedback strength, and initial-condition dependence. Section IV summarizes the findings, limitations, and possible extensions to nonlinear feedback, state-dependent learning, and adaptive networks.

\section{Models and Methods}

\subsection{Multi-Agent Systems on Graphs}

Let $G=(\mathcal{N},\mathcal{E})$ be a finite undirected graph, where
$\mathcal{N}=\{1,2,\ldots,N\}$ is the agent set and
$\mathcal{E}\subseteq\mathcal{N}\times\mathcal{N}$ is the edge set. For each
agent $i\in\mathcal{N}$, define its neighborhood by
\begin{equation*}
	\mathcal{N}_i := \{\,j\in\mathcal{N}:(i,j)\in\mathcal{E}\,\},
\end{equation*}
and denote its degree by
\begin{equation*}
	k_i:=|\mathcal{N}_i|.
\end{equation*}
Because the interaction payoff introduced below is averaged over an agent's
neighbors, we consider graphs without isolated nodes, such that $k_i\geq 1$
for every $i\in\mathcal{N}$.

The empirical degree distribution of $G$ is defined as
\begin{equation*}
	\rho(k) := \frac{1}{N} \big|\{\,i\in\mathcal{N}:k_i=k\,\}\big|, \qquad k\in\{1,\ldots,N-1\}.
\end{equation*}
It satisfies
\begin{equation*}
	\rho(k)\geq 0, \qquad \sum_{k=1}^{N-1}\rho(k)=1.
\end{equation*}

All agents select actions from the common finite action set
\begin{equation*}
	\mathcal{A} = \{a_1,a_2,\ldots,a_m\},
\end{equation*}
where $m\in\mathbb{N}$ denotes the number of available actions.

\subsection{Stateless $Q$-Learning and Softmax Policy}

For each agent $i\in\mathcal{N}$ at discrete time $t$, let
\begin{equation*}
	\mathbf{q}_i(t) = \big( q_i(a_1,t), q_i(a_2,t), \ldots, q_i(a_m,t) \big)^\top \in\mathbb{R}^m
\end{equation*}
denote its action-value vector, where $q_i(a_p,t)$ is the value assigned by
agent $i$ to action $a_p$. The learning process is stateless in the sense
that each agent maintains one value for each action and does not condition
these values on an explicitly observed environmental state.

Given an inverse-temperature parameter $\beta>0$, define the softmax policy
$\pi_\beta:\mathbb{R}^m\to\Delta^{m-1}$ by
\begin{equation*}
	\pi_\beta(a_p\mid\mathbf{q}) := \frac{\exp\big(\beta q(a_p)\big)} {\displaystyle\sum_{\ell=1}^{m} \exp\big(\beta q(a_\ell)\big)}, \qquad p=1,\ldots,m.
\end{equation*}
Let $A_i(t)\in\mathcal{A}$ denote the random action selected by agent $i$
at time $t$. Its action-selection probability is
\begin{equation*}
	x_i(a_p,t) := \pi_\beta\big(a_p\mid\mathbf{q}_i(t)\big).
\end{equation*}
A smaller value of $\beta$ produces a more exploratory policy, whereas a
larger value assigns greater probability to actions with higher estimated
values.

Let $\alpha\in(0,1]$ denote the learning rate. After selecting $A_i(t)$
and receiving reward $r_i(t)$, agent $i$ updates its action values
according to
\begin{equation*}
	q_i(a,t+1) = q_i(a,t) + \alpha\, \mathbb{I}_{\{A_i(t)=a\}} \big( r_i(t)-q_i(a,t) \big), \qquad a\in\mathcal{A},
\end{equation*}
where $\mathbb{I}_{\{\cdot\}}$ is the indicator function. Thus, only the
value of the selected action is updated, whereas the values of the unselected
actions remain unchanged.

\subsection{Environmental Dynamics}

Let $n(t)\in[0,1]$ denote a global environmental state shared by all
agents, where $n=0$ and $n=1$ represent two limiting environmental
conditions, such as resource depletion and resource abundance, respectively.
Let
\begin{equation*}
	R(n)\in\mathbb{R}^{m\times m}
\end{equation*}
denote the payoff matrix at environmental state $n\in[0,1]$. At each time
$t$, this matrix is shared by all agents and interaction edges.

For actions $a_p,a_q\in\mathcal{A}$, let $\mathbf{e}_p$ denote the
$p$-th standard basis vector in $\mathbb{R}^m$. The payoff received by an
agent selecting $a_p$ against an agent selecting $a_q$ is
\begin{equation*}
	r(a_p,a_q;n) := \mathbf{e}_p^\top R(n)\mathbf{e}_q.
\end{equation*}
At time $t$, agent $i$ interacts with all agents in its neighborhood and
receives the neighbor-averaged payoff
\begin{equation*}
	r_i(t) = \frac{1}{k_i} \sum_{j\in\mathcal{N}_i} r\big( A_i(t),A_j(t);n(t) \big).
\end{equation*}

The population-average mixed strategy is defined as
\begin{equation*}
	\bar{\mathbf{x}}(t) = \big( \bar{x}(a_1,t), \ldots, \bar{x}(a_m,t) \big)^\top \in\Delta^{m-1},
\end{equation*}
where
\begin{equation*}
	\bar{x}(a_p,t) := \frac{1}{N} \sum_{i=1}^{N} x_i(a_p,t), \qquad p=1,\ldots,m.
\end{equation*}

The environmental state evolves according to
\begin{equation}
	\dot{n}(t) = n(t)\big(1-n(t)\big) f\big(\bar{\mathbf{x}}(t)\big),
	\label{eq:environment-dynamics}
\end{equation}
where $f:\Delta^{m-1}\to\mathbb{R}$ is the environmental feedback function
that specifies how the population-average strategy affects environmental
change.

The finite-network implementation of the coupled learning--environment dynamics
is summarized in Algorithm~\ref{alg:finite-network-q-learning}. At each iteration $t$, all
agents compute their mixed strategies from the current action values
$\mathbf{q}_i(t)$ and sample their actions accordingly. The resulting payoffs are
then used to update the action values, and the environmental state is advanced
using the current population-average strategy $\bar{\mathbf{x}}(t)$. Define $\Pi_{[0,1]}(z):=\min\{1,\max\{0,z\}\}$.

\begin{algorithm}[t]
	\caption{Finite-Network Monte Carlo Simulation of Stateless $Q$-Learning with Environmental Feedback}
	\label{alg:finite-network-q-learning}
	\begin{algorithmic}[1]
		\REQUIRE Graph $G=(\mathcal{N},\mathcal{E})$; payoff map $R(n)$;
		feedback function $f$; learning rate $\alpha$; inverse temperature
		$\beta$; environmental timescale $\epsilon$; horizon $T$
		\STATE Initialize $q_i(a,0)$ for all $i\in\mathcal{N}$ and $a\in\mathcal{A}$
		\STATE Initialize the environmental state $n(0)\in[0,1]$
		\FOR{$t=0$ to $T-1$}
		\FOR{each agent $i\in\mathcal{N}$}
		\STATE Compute $x_i(a,t)\gets\pi_\beta(a\mid\mathbf{q}_i(t))$ for all $a\in\mathcal{A}$
		\STATE Sample $A_i(t)\sim x_i(\cdot,t)$
		\ENDFOR
		\STATE Compute $\bar{\mathbf{x}}(t)\gets|\mathcal{N}|^{-1}
		\sum_{i\in\mathcal{N}}\mathbf{x}_i(t)$
		\STATE Evaluate the payoff matrix $R(n(t))$
		\FOR{each agent $i\in\mathcal{N}$}
		\IF{$k_i>0$}
		\STATE Compute $r_i(t)\gets\dfrac{1}{k_i}\sum_{j\in\mathcal{N}_i}
		[R(n(t))]_{A_i(t),A_j(t)}$
		\ELSE
		\STATE Set $r_i(t)\gets0$
		\ENDIF
		\STATE Update
		\begin{equation*}
			q_i(A_i(t),t+1)\gets q_i(A_i(t),t) +\alpha\bigl(r_i(t)-q_i(A_i(t),t)\bigr)
		\end{equation*}
		\STATE Set $q_i(a,t+1)\gets q_i(a,t)$ for all $a\neq A_i(t)$
		\ENDFOR
		\STATE Update
		\begin{equation*}
			n(t+1)\gets\Pi_{[0,1]} \left[n(t)+\frac{1}{\epsilon}n(t)(1-n(t)) f\bigl(\bar{\mathbf{x}}(t)\bigr)\right]
		\end{equation*}
		\ENDFOR
	\end{algorithmic}
\end{algorithm}

\subsection{Neighbor-Action Configuration}

For each agent $i\in\mathcal{N}$, action $a_q\in\mathcal{A}$, and time
$t$, define
\begin{equation*}
	c_{i,q}(t) := \big| \{\,j\in\mathcal{N}_i:A_j(t)=a_q\,\} \big|, \qquad q=1,\ldots,m,
\end{equation*}
as the number of neighbors of agent $i$ that select action $a_q$. The
neighbor-action configuration associated with agent $i$ is the random
vector
\begin{equation*}
	\boldsymbol{\gamma}_i(t) := \big( c_{i,1}(t), \ldots, c_{i,m}(t) \big)^\top\in \mathbb{N}^m, \quad \sum_{q=1}^{m}c_{i,q}(t)=k_i.
\end{equation*}

For a node of degree $k$, define the configuration space
\begin{equation*}
	\Gamma(k) := \left\{ \boldsymbol{\eta} = (\eta_1,\ldots,\eta_m)^\top \in \mathbb{N}^m: \sum_{q=1}^{m}\eta_q=k \right\}.
\end{equation*}
Accordingly, the neighbor-action configuration of agent $i$ satisfies
\begin{equation*}
	\boldsymbol{\gamma}_i(t)\in\Gamma(k_i).
\end{equation*}

\subsection{Payoff Conditional on Neighbor Configuration}

Consider an agent of degree $k$ and a neighbor-action configuration
$\boldsymbol{\eta}\in\Gamma(k)$. The neighbor-averaged payoff from selecting
action $a_p$ at environmental state $n$ is defined as
\begin{equation*}
	u_p(\boldsymbol{\eta};n) := \frac{1}{k} \sum_{q=1}^{m} \eta_q\,r(a_p,a_q;n) = \frac{1}{k} \mathbf{e}_p^\top R(n) \boldsymbol{\eta}.
\end{equation*}
Here, $\eta_q$ is the number of neighbors selecting action $a_q$, and the
factor $1/k$ converts the accumulated interaction payoff into a
neighbor-averaged payoff.

For agent $i$, if $A_i(t)=a_p$, then the reward used in its $Q$-value
update is
\begin{equation*}
	r_i(t) = u_p\big( \boldsymbol{\gamma}_i(t); n(t) \big).
\end{equation*}

\subsection{Mean-Field Approximation}
An agent's realized payoff depends on the actions selected by its neighbors, and the resulting neighbor-action configuration is both time varying and random. An exact population-level analysis must therefore track the joint learning dynamics and local action correlations across the network, which is generally intractable for large interacting populations. To obtain a closed description of the macroscopic dynamics, we adopt a mean-field approximation that replaces local action-selection probabilities with the corresponding population averages. This approximation suppresses local correlations and degree-conditioned fluctuations; its predictive accuracy on finite networks is assessed through agent-based simulations in Section III. The specific closure assumption is stated below.

\begin{assumption}[Mean-Field Approximation]
	\label{ass:mean-field-closure}
	At each time $t$, conditional on the population-average mixed strategy
	$\bar{\mathbf{x}}(t)$, the random actions selected by the neighbors of any
	agent $i\in\mathcal{N}$ are mutually independent. Moreover, every neighbor
	$j\in\mathcal{N}_i$ is approximated as selecting action
	$a_q\in\mathcal{A}$ with probability
	\begin{equation*}
		\mathbb{P}\big( A_j(t)=a_q \mid \bar{\mathbf{x}}(t) \big) = \bar{x}(a_q,t), \qquad q=1,\ldots,m.
	\end{equation*}
\end{assumption}

\begin{lemma}[Multinomial Distribution of Neighbor-Action Configurations]
	\label{lem:neighbor-configuration}
	Under Assumption~\ref{ass:mean-field-closure}, fix a time $t$ and regard the
	population-average mixed strategy $\bar{\mathbf{x}}(t)$ as given. For an
	agent $i$ with degree $k_i=k$, the neighbor-action configuration satisfies
	\begin{equation*}
		\boldsymbol{\gamma}_i(t) \sim \operatorname{Multinomial} \big( k,\bar{\mathbf{x}}(t) \big).
	\end{equation*}
	Equivalently, for every
	$\boldsymbol{\eta}=(\eta_1,\ldots,\eta_m)^\top\in\Gamma(k)$,
	\begin{equation*}
		\mathbb{P}\big( \boldsymbol{\gamma}_i(t)=\boldsymbol{\eta} \mid k_i=k \big) = \frac{k!}{\displaystyle\prod_{p=1}^{m}\eta_p!} \prod_{p=1}^{m} \bar{x}(a_p,t)^{\eta_p}.
	\end{equation*}
\end{lemma}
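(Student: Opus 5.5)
The plan is to reduce the statement to the elementary fact that counting the outcomes of $k$ i.i.d.\ categorical trials gives a multinomial vector. Fix $t$ and condition on $\bar{\mathbf{x}}(t)$ throughout. Label the neighbors of agent $i$ as $j_1,\ldots,j_k$, where $k=k_i$; this uses $k_i\ge 1$ and the fact that $\mathcal{N}_i$ is a set, so each neighbor is counted once. By Assumption~\ref{ass:mean-field-closure}, the actions $A_{j_1}(t),\ldots,A_{j_k}(t)$ are conditionally independent. Each has the same categorical law on $\mathcal{A}$, with $\mathbb{P}(A_{j_s}(t)=a_q\mid\bar{\mathbf{x}}(t))=\bar{x}(a_q,t)$. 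These probabilities form a valid distribution because $\bar{\mathbf{x}}(t)\in\Delta^{m-1}$, being an average of softmax vectors.

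First, I would compute the probability of any fixed action sequence $(a_{q_1},\ldots,a_{q_k})\in\mathcal{A}^k$. By independence it equals $\prod_{s=1}^k\bar{x}(a_{q_s},t)$. If the sequence contains action $a_p$ exactly $\eta_p$ times, this product is $\prod_{p=1}^m\bar{x}(a_p,t)^{\eta_p}$, so it depends only on the count vector $\boldsymbol{\eta}$.

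Second, I would note that the event $\{\boldsymbol{\gamma}_i(t)=\boldsymbol{\eta}\}$ is the disjoint union of the events that the labelled neighbors play a particular sequence whose counts equal $\boldsymbol{\eta}$. This holds because $c_{i,q}(t)$ is defined as exactly that count. The number of such sequences is the multinomial coefficient $k!/\prod_p\eta_p!$: choose which $\eta_1$ positions take $a_1$, then which $\eta_2$ of the rest take $a_2$, and so on. Summing the common probability over these sequences gives the displayed formula for every $\boldsymbol{\eta}\in\Gamma(k)$. For $\boldsymbol{\eta}\notin\Gamma(k)$ the probability is zero, since the counts always sum to $k$. As a consistency check, the multinomial theorem shows the probabilities over $\Gamma(k)$ sum to $(\sum_p\bar{x}(a_p,t))^k=1$. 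This identifies the law as $\operatorname{Multinomial}(k,\bar{\mathbf{x}}(t))$.

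There is no real analytical obstacle; the only point needing care is interpretive. The conditioning on $k_i=k$ in the statement is really conditioning on the fixed graph together with $\bar{\mathbf{x}}(t)$. Under the closure, the conditional law of each neighbor's action does not depend on which neighbor it is, nor on agent $i$'s own $\mathbf{q}_i(t)$. The resulting distribution therefore depends on $i$ only through $k$, and I would state this explicitly so the formula is read as a property of the approximation rather than of the exact finite-network process.
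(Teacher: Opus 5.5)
Your proposal is correct and follows essentially the paper's argument: conditional independence and identical categorical laws under Assumption~\ref{ass:mean-field-closure} give $k$ i.i.d.\ trials, whose count vector is $\operatorname{Multinomial}(k,\bar{\mathbf{x}}(t))$. The paper invokes that standard fact directly. You also derive it by computing the probability of each labelled action sequence and counting sequences with the multinomial coefficient, which adds detail without changing the route.
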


\begin{proof}
	Fix an agent $i$ with $k_i=k$. Under Assumption~\ref{ass:mean-field-closure}, the
	actions selected by the $k$ neighbors of agent $i$ are conditionally
	independent given $\bar{\mathbf{x}}(t)$, and each neighbor selects action
	$a_p$ with probability $\bar{x}(a_p,t)$. Therefore, the vector that counts
	the numbers of neighbors selecting the $m$ available actions follows a
	multinomial distribution with $k$ trials and category-probability vector
	$\bar{\mathbf{x}}(t)$. Hence, for every
	$\boldsymbol{\eta}\in\Gamma(k)$,
	\begin{equation*}
		\mathbb{P}\big( \boldsymbol{\gamma}_i(t)=\boldsymbol{\eta} \mid k_i=k \big) = \frac{k!}{\displaystyle\prod_{p=1}^{m}\eta_p!} \prod_{p=1}^{m} \bar{x}(a_p,t)^{\eta_p}.
	\end{equation*}
	This proves the lemma.
\end{proof}

For agent $i$, action $a_p\in\mathcal{A}$, and time $t$, define the
increment in the corresponding action value by
\begin{equation*}
	\Delta q_i(a_p,t) := q_i(a_p,t+1)-q_i(a_p,t).
\end{equation*}
The stateless $Q$-learning update rule then gives
\begin{equation*}
	\Delta q_i(a_p,t) = \alpha\, \mathbb{I}_{\{A_i(t)=a_p\}} \big( r_i(t)-q_i(a_p,t) \big).
\end{equation*}
Thus, the value of action $a_p$ changes only when agent $i$ realizes that
action at time $t$.

\begin{lemma}[Mean $Q$-Value Increment under the Mean-Field Approximation]
	\label{lem:individual-mean-increment}
	Under Assumption~\ref{ass:mean-field-closure}, for each agent $i$ and time $t$, the mean
	increment in the value of action $a_p$ is
	\begin{equation}
		\mathbb{E}\big[ \Delta q_i(a_p,t) \big] = \alpha\,x_i(a_p,t) \left( \mu_p\big(k_i,t;n(t)\big) - q_i(a_p,t) \right),
		\label{eq:individual-mean-increment}
	\end{equation}
	where the expectation is taken over the action selected by agent $i$ and
	the neighbor-action configuration at time $t$, with
	$\mathbf{q}_i(t)$, $k_i$, $n(t)$, and
	$\bar{\mathbf{x}}(t)$ treated as given. The mean-field expected payoff from
	selecting action $a_p$ for an agent of degree $k$ is
	\begin{equation*}
		\mu_p(k,t;n) := \sum_{\boldsymbol{\eta}\in\Gamma(k)} \mathbb{P}\big( \boldsymbol{\gamma}_i(t)=\boldsymbol{\eta} \mid k_i=k \big) u_p(\boldsymbol{\eta};n).
	\end{equation*}
\end{lemma}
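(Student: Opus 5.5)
The plan is to compute the expectation by conditioning, in turn, on the action chosen by agent $i$ and then on the neighbor-action configuration. The only modeling input beyond the update rule is Lemma~\ref{lem:neighbor-configuration}.

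First, since $\mathbf{q}_i(t)$, $k_i$, $n(t)$ and $\bar{\mathbf{x}}(t)$ are treated as given, both $q_i(a_p,t)$ and the payoff functions $u_p(\cdot;n(t))$ are deterministic. Starting from
\begin{equation*}
\Delta q_i(a_p,t)=\alpha\,\mathbb{I}_{\{A_i(t)=a_p\}}\bigl(r_i(t)-q_i(a_p,t)\bigr),
\end{equation*}
I replace $r_i(t)$ on the event $\{A_i(t)=a_p\}$ by $u_p(\boldsymbol{\gamma}_i(t);n(t))$, as in the subsection on payoffs conditional on neighbor configuration. This gives
\begin{equation*}
\Delta q_i(a_p,t)=\alpha\,\mathbb{I}_{\{A_i(t)=a_p\}}\bigl(u_p(\boldsymbol{\gamma}_i(t);n(t))-q_i(a_p,t)\bigr).
\end{equation*}
The point of this substitution is that the random variable multiplying the indicator now depends only on $\boldsymbol{\gamma}_i(t)$, not on $A_i(t)$.

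Second, I take expectations. Agent $i$ samples $A_i(t)$ from its own softmax policy $x_i(\cdot,t)$. Under Assumption~\ref{ass:mean-field-closure}, the neighbors' actions are generated conditionally on $\bar{\mathbf{x}}(t)$ alone. I will therefore treat $A_i(t)$ and $\boldsymbol{\gamma}_i(t)$ as independent given the conditioning data. This is also how the algorithm works: all agents sample simultaneously and independently given their $Q$-values. With independence, the expectation factorizes:
\begin{equation*}
\mathbb{E}[\Delta q_i(a_p,t)]=\alpha\,\mathbb{P}(A_i(t)=a_p)\,\Bigl(\mathbb{E}\bigl[u_p(\boldsymbol{\gamma}_i(t);n(t))\bigr]-q_i(a_p,t)\Bigr).
\end{equation*}
By definition, $\mathbb{P}(A_i(t)=a_p)=x_i(a_p,t)$.

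Third, I evaluate the remaining expectation as a finite sum over $\Gamma(k_i)$, using the multinomial law from Lemma~\ref{lem:neighbor-configuration}:
\begin{equation*}
\mathbb{E}\bigl[u_p(\boldsymbol{\gamma}_i(t);n(t))\bigr]=\sum_{\boldsymbol{\eta}\in\Gamma(k_i)}\mathbb{P}\bigl(\boldsymbol{\gamma}_i(t)=\boldsymbol{\eta}\mid k_i\bigr)\,u_p(\boldsymbol{\eta};n(t)).
\end{equation*}
This sum is exactly $\mu_p(k_i,t;n(t))$, which yields \eqref{eq:individual-mean-increment}.

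As an optional check, linearity of $u_p$ in $\boldsymbol{\eta}$ together with $\mathbb{E}[\boldsymbol{\gamma}_i]=k\bar{\mathbf{x}}(t)$ gives $\mu_p=\mathbf{e}_p^\top R(n)\bar{\mathbf{x}}(t)$. This confirms that $\mu_p$ is well defined and finite.

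The main obstacle is justifying the factorization in the second step, namely the independence of agent $i$'s own action from its neighbors' configuration. Assumption~\ref{ass:mean-field-closure} states mutual independence among the neighbors but says nothing explicit about agent $i$ itself. I would close this gap by arguing from the sampling mechanism of Algorithm~\ref{alg:finite-network-q-learning}: given all $Q$-vectors, actions are drawn independently across agents. I would then state that the mean-field closure retains this independence while replacing the neighbors' individual policies by $\bar{\mathbf{x}}(t)$.
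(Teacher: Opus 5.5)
Your proposal is correct and follows essentially the same route as the paper, which likewise combines $\mathbb{P}(A_i(t)=a_p)=x_i(a_p,t)$ with $\mathbb{E}[r_i(t)\mid A_i(t)=a_p]=\mu_p(k_i,t;n(t))$. The paper asserts that conditional-expectation step without comment, but $\mu_p$ is defined through the law of $\boldsymbol{\gamma}_i(t)$ given only $k_i$, so the step tacitly needs $A_i(t)$ to be independent of $\boldsymbol{\gamma}_i(t)$; you state and justify this explicitly, which makes your version slightly more careful.
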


\begin{proof}
	From the increment form of the stateless $Q$-learning update,
	\begin{equation*}
		\Delta q_i(a_p,t) = \alpha\, \mathbb{I}_{\{A_i(t)=a_p\}} \big( r_i(t)-q_i(a_p,t) \big).
	\end{equation*}
	At time $t$, the value $q_i(a_p,t)$ is fixed. Moreover,
	\begin{equation*}
		\mathbb{P}\big(A_i(t)=a_p\big) = x_i(a_p,t),
	\end{equation*}
	and, conditional on $A_i(t)=a_p$, the mean-field expected reward is
	\begin{equation*}
		\mathbb{E}\big[ r_i(t) \mid A_i(t)=a_p \big] = \mu_p\big(k_i,t;n(t)\big).
	\end{equation*}
	Taking the expectation of the increment and combining these two relations
	gives
	\begin{equation*}
		\mathbb{E}\big[ \Delta q_i(a_p,t) \big] = \alpha\,x_i(a_p,t) \left( \mu_p\big(k_i,t;n(t)\big) - q_i(a_p,t) \right),
	\end{equation*}
	which proves the lemma.
\end{proof}

Lemma~\ref{lem:individual-mean-increment} gives the expected payoff conditional on the
agent degree. Under the common payoff matrix and neighbor-averaged payoff
adopted in the present model, the multinomial mean-field approximation yields
\begin{equation*}
	\mu_p(k,t;n(t)) = \mathbf e_p^\top R(n(t))\bar{\mathbf x}(t),
\end{equation*}
so the explicit dependence on $k$ cancels in the numerical setting considered
here. We nevertheless retain the degree-conditioned notation
$\mu_p(k,t;n(t))$ and the averaging over $\rho(k)$, because this formulation
also accommodates extensions with degree-dependent payoff matrices or payoff
aggregation rules, for which the expected payoff generally depends on $k$.
We next introduce a representative-agent description to construct the
population-level mean drift.

\begin{lemma}[Population-Averaged Mean $Q$-Value Increment]
	\label{lem:population-mean-increment}
	Under Assumption~\ref{ass:mean-field-closure}, let $q(a_p,t)$ denote the value assigned to
	action $a_p$ by a representative agent. We adopt a population-averaged
	selection closure in which the representative agent selects action $a_p$
	with probability $\bar{x}(a_p,t)$. Averaging the degree-conditioned payoff
	over the empirical degree distribution $\rho(k)$, the mean increment is
	approximated by
	\begin{equation}
		\mathbb{E}\big[ \Delta q(a_p,t) \big] = \alpha\,\bar{x}(a_p,t) \left( \sum_k \rho(k)\, \mu_p\big(k,t;n(t)\big) - q(a_p,t) \right).
		\label{eq:population-mean-increment}
	\end{equation}
	Here,
	\begin{equation*}
		\sum_k \rho(k)\, \mu_p\big(k,t;n(t)\big)
	\end{equation*}
	is the expected payoff from action $a_p$, averaged over the degree of a
	uniformly sampled agent.
\end{lemma}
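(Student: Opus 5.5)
The plan is to obtain \eqref{eq:population-mean-increment} by applying Lemma~\ref{lem:individual-mean-increment} to a representative agent and then averaging in two stages: first over that agent's selection probability, then over its degree. The computation is short once the closure's modelling choices are fixed. The main work is making sure each averaging step is exactly where the population-averaged selection closure is used, so that the result is an honest approximation rather than an identity.

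\textbf{Step 1 (conditioning).} Model the representative agent as one whose degree $K$ is drawn from $\rho(k)$, since a uniformly sampled node has degree $k$ with probability $\rho(k)$. Take its action value $q(a_p,t)$, the environmental state $n(t)$ and $\bar{\mathbf{x}}(t)$ as given. Conditional on $K=k$, the argument of Lemma~\ref{lem:individual-mean-increment} applies verbatim. Under Assumption~\ref{ass:mean-field-closure}, Lemma~\ref{lem:neighbor-configuration} gives a multinomial neighbor configuration, so the conditional reward given that action $a_p$ is selected has mean $\mu_p(k,t;n(t))$. This yields
\begin{equation*}
\mathbb{E}\big[\Delta q(a_p,t)\mid K=k\big]=\alpha\,\mathbb{P}(A=a_p\mid K=k)\big(\mu_p(k,t;n(t))-q(a_p,t)\big).
\end{equation*}

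\textbf{Step 2 (selection closure).} Here the closure enters. Replace the individual softmax probability $x_i(a_p,t)$ by $\bar{x}(a_p,t)$, and take it to be independent of $K$. In other words, the selection event and the degree are decoupled, and the selection probability is the population average. In the actual network, $x_i$ varies across agents and may correlate with both $k_i$ and $q_i(a_p,t)$; dropping those correlations is precisely why the statement reads ``is approximated by''. I expect this to be the main obstacle. It cannot be proved as an equality, only stated carefully as the defining modelling step. I will therefore flag explicitly that \eqref{eq:population-mean-increment} neglects the covariance between $x_i(a_p,t)$ and $\mu_p(k_i,\cdot)-q_i(a_p,t)$.

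\textbf{Step 3 (degree averaging).} Apply the law of total expectation over $K\sim\rho$:
\begin{equation*}
\mathbb{E}[\Delta q(a_p,t)]=\sum_k\rho(k)\,\alpha\,\bar{x}(a_p,t)\big(\mu_p(k,t;n(t))-q(a_p,t)\big).
\end{equation*}
Factor out $\alpha\bar{x}(a_p,t)$ and use $\sum_k\rho(k)=1$ to pull $q(a_p,t)$ out of the sum. This gives \eqref{eq:population-mean-increment}.

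Finally, note the interpretation in the statement. By construction, $\sum_k\rho(k)\mu_p(k,t;n(t))$ is the expected payoff for a uniformly sampled agent. Using the multinomial mean $\mathbb{E}[\boldsymbol{\gamma}]=k\bar{\mathbf{x}}(t)$ and the linearity of $u_p$ in $\boldsymbol{\eta}$, it reduces to $\mathbf{e}_p^\top R(n(t))\bar{\mathbf{x}}(t)$ in the present model, which serves as a consistency check.
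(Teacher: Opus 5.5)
Your proposal is correct and follows the paper's proof essentially step for step. Both condition on the degree $K\sim\rho$ of a uniformly sampled agent, invoke Lemma~\ref{lem:individual-mean-increment} with the selection probability replaced by $\bar{x}(a_p,t)$ through the population-averaged selection closure, and then apply the law of total expectation together with $\sum_k\rho(k)=1$; your explicit note that this closure discards the covariance between $x_i(a_p,t)$ and $\mu_p(k_i,\cdot)-q_i(a_p,t)$ makes precise what the paper leaves implicit in the word ``approximated.''
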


\begin{proof}
	Let $K$ denote the degree of an agent sampled uniformly from the population.
	By the definition of the empirical degree distribution,
	\begin{equation*}
		\mathbb{P}(K=k)=\rho(k).
	\end{equation*}
	Conditional on $K=k$, Lemma~\ref{lem:individual-mean-increment}, together with the
	population-averaged selection closure, gives
	\begin{equation*}
		\mathbb{E}\big[ \Delta q(a_p,t) \mid K=k \big] = \alpha\,\bar{x}(a_p,t) \left( \mu_p\big(k,t;n(t)\big) - q(a_p,t) \right).
	\end{equation*}
	Applying the law of total expectation over $K$ yields
	\begin{equation*}
		\begin{aligned} \mathbb{E}\big[ \Delta q(a_p,t) \big] &= \sum_k \rho(k)\, \mathbb{E}\big[ \Delta q(a_p,t) \mid K=k \big] \\
		&= \alpha\,\bar{x}(a_p,t) \sum_k \rho(k) \left( \mu_p\big(k,t;n(t)\big) - q(a_p,t) \right) \\
		&= \alpha\,\bar{x}(a_p,t) \left( \sum_k \rho(k)\, \mu_p\big(k,t;n(t)\big) - q(a_p,t) \right), \end{aligned}
	\end{equation*}
	where the last equality uses $\sum_k\rho(k)=1$. This proves the lemma.
\end{proof}

We identify the expected $Q$-value increment per learning round with a
continuous-time drift. This mean-drift interpolation retains the conditional
mean of the discrete update while neglecting its stochastic fluctuations.
\begin{theorem}[Mean-Field Transport Equation]
	\label{thm:mean-field-transport}
	Under Assumption~\ref{ass:mean-field-closure} and the population-averaged selection closure,
	let $p(\mathbf{q},t)$ denote the population density of the $Q$-value vector
	\begin{equation*}
		\mathbf{q} = \big( q(a_1),\ldots,q(a_m) \big)^\top \in\mathbb{R}^m.
	\end{equation*}
	In the drift-based continuous-time approximation, the density evolves
	according to the transport equation
	\begin{equation}
		\frac{\partial}{\partial t}p(\mathbf{q},t) = -\sum_{p=1}^{m} \frac{\partial}{\partial q(a_p)} \Big( p(\mathbf{q},t)\,v_p(\mathbf{q},t) \Big),
		\label{eq:transport-equation}
	\end{equation}
	where the $p$-th component of the drift field is
	\begin{equation}
		v_p(\mathbf{q},t) = \alpha\,\bar{x}(a_p,t) \left( \sum_k \rho(k)\, \mu_p\big(k,t;n(t)\big) - q(a_p) \right).
		\label{eq:transport-drift}
	\end{equation}
	The population-average action-selection probability is obtained
	self-consistently from the density:
	\begin{equation}
		\bar{x}(a_p,t) = \int_{\mathbb{R}^m} x(a_p;\mathbf{q})\, p(\mathbf{q},t)\, \mathrm{d}\mathbf{q}, \qquad p=1,\ldots,m.
		\label{eq:population-strategy-density}
	\end{equation}
\end{theorem}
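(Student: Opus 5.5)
The plan is to derive the transport equation as the continuity equation of the deterministic flow produced by the mean-drift interpolation. I will use Lemma~\ref{lem:population-mean-increment} as the only input from the microscopic model.

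\textbf{Step 1: the characteristic flow.} Fix $t$ and treat $\bar{\mathbf{x}}(t)$ and $n(t)$ as given. By Lemma~\ref{lem:population-mean-increment}, the expected one-round increment of component $a_p$ for a representative agent at $\mathbf{q}$ is exactly $v_p(\mathbf{q},t)$ in \eqref{eq:transport-drift}. Identifying one learning round with one unit of continuous time and discarding fluctuations gives the ODE
\begin{equation*}
\frac{\mathrm{d}}{\mathrm{d}t}\mathbf{q}(t)=\mathbf{v}(\mathbf{q}(t),t).
\end{equation*}
Each agent's $Q$-vector is therefore transported along this flow $\Phi_{s,t}$.

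The drift is affine in $\mathbf{q}$, and its coefficients $\alpha\bar{x}(a_p,t)\in[0,\alpha]$ and $\sum_k\rho(k)\mu_p$ are bounded and continuous in $t$, at least when $R(n)$ is continuous on $[0,1]$. The field is thus globally Lipschitz in $\mathbf{q}$ uniformly in $t$, so the flow exists, is unique, and is a diffeomorphism of $\mathbb{R}^m$.

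\textbf{Step 2: the weak continuity equation.} The density $p(\cdot,t)$ is defined as the push-forward of $p(\cdot,0)$ under $\Phi_{0,t}$. For any test function $\varphi\in C_c^\infty(\mathbb{R}^m)$, write
\begin{equation*}
\int\varphi\,p(\mathbf{q},t)\,\mathrm{d}\mathbf{q}=\int\varphi(\Phi_{0,t}(\mathbf{q}_0))\,p(\mathbf{q}_0,0)\,\mathrm{d}\mathbf{q}_0.
\end{equation*}
Differentiating in $t$ gives
\begin{equation*}
\frac{\mathrm{d}}{\mathrm{d}t}\int\varphi\,p\,\mathrm{d}\mathbf{q}=\int\nabla\varphi\cdot\mathbf{v}\,p\,\mathrm{d}\mathbf{q}.
\end{equation*}
Integrating by parts, with compact support removing the boundary terms, turns this into the weak form of \eqref{eq:transport-equation}. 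It becomes the classical form whenever $p$ is $C^1$.

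Alternatively, I would note that the Jacobian is $\mathbf{J}(t)=\det D\Phi_{0,t}$ and that Liouville's formula gives $\dot{\mathbf{J}}=(\nabla\cdot\mathbf{v})\mathbf{J}$ with $\nabla\cdot\mathbf{v}=-\alpha\sum_p\bar{x}(a_p,t)=-\alpha$. Combined with $p(\Phi_{0,t}(\mathbf{q}_0),t)\,\mathbf{J}(t)=p(\mathbf{q}_0,0)$, this yields the same equation explicitly.

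\textbf{Step 3: closing the system.} The value $\bar{x}(a_p,t)$ is the population average of $x_i(a_p,t)=\pi_\beta(a_p\mid\mathbf{q}_i(t))$. Replacing the empirical measure of $\{\mathbf{q}_i(t)\}$ by the density $p$ gives \eqref{eq:population-strategy-density}. The softmax map takes values in $[0,1]$ and is smooth, so the integral is well defined for any probability density. Feeding this $\bar{\mathbf{x}}$ back into $\mathbf{v}$ makes the equation nonlocal but self-consistent. Together with the projected update for $n$, it determines the coefficients at each time.

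\textbf{Main obstacle.} The hard part is this nonlocal coupling. Because $\mathbf{v}$ depends on $p$ through $\bar{\mathbf{x}}$ (and on $n$), Step 1 cannot be done with a fixed field. My first attempt would be a fixed-point argument. Given a continuous path $t\mapsto(\bar{\mathbf{x}}(t),n(t))$, I would build the flow, push $p_0$ forward, and recompute $\bar{\mathbf{x}}$ via \eqref{eq:population-strategy-density}. Since $\pi_\beta$ is $\beta$-Lipschitz and the flow depends Lipschitz-continuously on its coefficients, this map should be a contraction on short time intervals, and iterating extends the solution.

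Because the theorem is stated within the drift-based approximation, I would present this well-posedness only as a remark. The core derivation is Steps 1--3, with the continuity equation taken as the identity satisfied by the push-forward density.
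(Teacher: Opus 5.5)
Your proposal is correct and follows essentially the paper's argument. The paper likewise moves a representative $Q$-vector along $\dot{\mathbf Q}=\mathbf v(\mathbf Q,t)$, differentiates $\mathbb E[\varphi(\mathbf Q(t))]$ for $\varphi\in C_c^\infty(\mathbb R^m)$ via the chain rule, integrates by parts to reach the distributional form, and imposes the softmax average over $p$ as the self-consistency closure; your push-forward phrasing, Lipschitz/flow remarks, Liouville check with $\nabla\cdot\mathbf v=-\alpha$, and fixed-point comment add rigor the paper omits without changing the route.
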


\begin{proof}
	Let $\mathbf{Q}(t)$ denote the $Q$-value vector of a representative agent
	in the first-order mean-field continuous-time approximation, whose evolution is
	governed by
	\begin{equation*}
		\frac{\mathrm{d}}{\mathrm{d}t}Q(a_p,t) = v_p\bigl(\mathbf{Q}(t),t\bigr), \qquad p=1,\ldots,m.
	\end{equation*}
	Let $\varphi\in C_c^\infty(\mathbb{R}^m)$ be an arbitrary smooth test
	function with compact support. By the chain rule,
	\begin{equation*}
		\frac{\mathrm{d}}{\mathrm{d}t} \varphi\big(\mathbf{Q}(t)\big) = \sum_{p=1}^{m} \frac{\partial\varphi}{\partial q(a_p)} \big(\mathbf{Q}(t)\big) v_p\big(\mathbf{Q}(t),t\big).
	\end{equation*}
	Taking expectations gives
	\begin{equation*}
		\frac{\mathrm{d}}{\mathrm{d}t} \mathbb{E}\big[ \varphi\big(\mathbf{Q}(t)\big) \big] = \mathbb{E}\left[ \sum_{p=1}^{m} \frac{\partial\varphi}{\partial q(a_p)} \big(\mathbf{Q}(t)\big) v_p\big(\mathbf{Q}(t),t\big) \right].
	\end{equation*}
	
	Let $p(\mathbf{q},t)$ denote the probability density of
	$\mathbf{Q}(t)$. The left-hand side can be written as
	\begin{equation*}
		\frac{\mathrm{d}}{\mathrm{d}t} \mathbb{E}\big[ \varphi\big(\mathbf{Q}(t)\big) \big] = \int_{\mathbb{R}^m} \varphi(\mathbf{q})\, \frac{\partial p(\mathbf{q},t)}{\partial t} \,\mathrm{d}\mathbf{q},
	\end{equation*}
	whereas the right-hand side becomes
	\begin{equation*}
		\int_{\mathbb{R}^m} \sum_{p=1}^{m} \frac{\partial\varphi(\mathbf{q})}{\partial q(a_p)} v_p(\mathbf{q},t) p(\mathbf{q},t) \,\mathrm{d}\mathbf{q}.
	\end{equation*}
	
	Because $\varphi$ has compact support and the probability flux vanishes at
	the boundary, integration by parts yields
	\begin{equation*}
		\begin{aligned} &\int_{\mathbb{R}^m} \sum_{p=1}^{m} \frac{\partial\varphi(\mathbf{q})}{\partial q(a_p)} v_p(\mathbf{q},t) p(\mathbf{q},t) \,\mathrm{d}\mathbf{q} \\
		&\quad = -\int_{\mathbb{R}^m} \varphi(\mathbf{q}) \sum_{p=1}^{m} \frac{\partial}{\partial q(a_p)} \left( p(\mathbf{q},t)v_p(\mathbf{q},t) \right) \,\mathrm{d}\mathbf{q}. \end{aligned}
	\end{equation*}
	
	Combining the preceding expressions, we obtain
	\begin{equation*}
		\int_{\mathbb{R}^m} \varphi(\mathbf{q}) \left[ \frac{\partial p(\mathbf{q},t)}{\partial t} + \sum_{p=1}^{m} \frac{\partial}{\partial q(a_p)} \left( p(\mathbf{q},t)v_p(\mathbf{q},t) \right) \right] \,\mathrm{d}\mathbf{q} = 0.
	\end{equation*}
	Since this identity holds for every
	$\varphi\in C_c^\infty(\mathbb{R}^m)$, the density satisfies, in the
	distributional sense,
	\begin{equation*}
		\frac{\partial p(\mathbf{q},t)}{\partial t} = -\sum_{p=1}^{m} \frac{\partial}{\partial q(a_p)} \left( p(\mathbf{q},t)v_p(\mathbf{q},t) \right).
	\end{equation*}
\end{proof}

\subsection{Coupled Mean-Field Learning and Environment Dynamics}

Theorem~\ref{thm:mean-field-transport} gives the within-round transport equation for the population density $p(\mathbf q,t)$. To couple this equation with the discrete learning process, let $r=0,1,2,\ldots$ denote the learning-round index and $s\in[0,1]$ the within-round time. Define $p_r(\mathbf q,s):=p(\mathbf q,r+s)$ and $n_r:=n(r)$. For $s\in(0,1)$ and $p=1,\ldots,m$, the coupled mean-field transport--environment system is
\begin{equation}
	\left\{ \begin{aligned} \frac{\partial p_r(\mathbf q,s)}{\partial s} &= -\sum_{p=1}^{m}\frac{\partial}{\partial q(a_p)}\left[p_r(\mathbf q,s)v_{p,r}(\mathbf q,s)\right], \\
	v_{p,r}(\mathbf q,s) &= \alpha\bar{x}_r(a_p,s)\left[\sum_k\rho(k)\mu_p(k,s;n_r)-q(a_p)\right], \\
	\bar{x}_r(a_p,s) &= \int_{\mathbb R^m}\pi_\beta(a_p\mid\mathbf q)p_r(\mathbf q,s)\,\mathrm d\mathbf q, \\
	p_{r+1}(\mathbf q,0) &= p_r(\mathbf q,1), \\
	n_{r+1} &= \Pi_{[0,1]}\left[n_r+\frac{1}{\epsilon}n_r(1-n_r)f\bigl(\bar{\mathbf x}_r(0)\bigr)\right]. \end{aligned} \right.
	\label{eq:coupled-transport-environment}
\end{equation}
Here, $\epsilon>0$ controls the environmental response timescale relative to the learning dynamics. During round $r$, the environmental state is fixed at $n_r$, while the density and population-average strategy evolve over $s\in[0,1]$. Both $p_{r+1}(\mathbf q,0)$ and $n_{r+1}$ are determined from the state at the beginning of the round, giving a synchronous learning--environment update.

The projected environmental update is the round-based realization of the environmental rate law in~\eqref{eq:environment-dynamics}. If the resource loss during one round exceeds the remaining resource level, the projection assigns $n_{r+1}=0$, representing complete resource depletion. Similarly, an update above one represents resource saturation. The boundaries $n=0$ and $n=1$ are absorbing under this update.

Fig.~\ref{fig:framework} summarizes the finite-network learning process and its population-level mean-field description for the binary-action case $\mathcal A=\{C,D\}$.

\begin{figure*}[t]
	\centering
	\includegraphics[width=\textwidth]{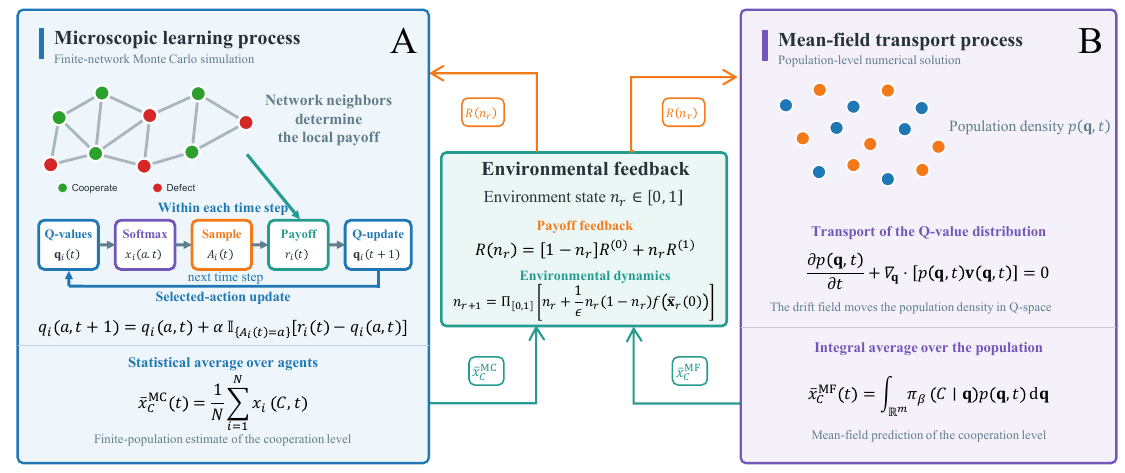}
	\caption{Micro-to-macro representation of the coupled learning--environment dynamics for the binary-action case $\mathcal A=\{C,D\}$. The finite-network Monte Carlo simulation and the mean-field transport equation follow the same synchronous learning-round convention and are coupled through the population-average strategy and the projected environmental update.}
	\label{fig:framework}
\end{figure*}

\section{Results}
To evaluate the general framework, we specialize it to a two-action game with $\mathcal{A}=\{C,D\}$, where $C$ and $D$ denote cooperation and defection, respectively. The limiting environmental states $n=0$ and $n=1$ are associated with the payoff matrices $R^{(0)}$ and $R^{(1)}$, respectively. With the rows and columns ordered as $(C,D)$, these matrices are given by
\begin{equation*}
	R^{(0)} = \begin{pmatrix} R_0 & S_0\\
	T_0 & P_0 \end{pmatrix}, \qquad R^{(1)} = \begin{pmatrix} R_1 & S_1\\
	T_1 & P_1 \end{pmatrix}.
\end{equation*}
For each $s\in\{0,1\}$, $R_s$ and $P_s$ denote the payoffs from mutual cooperation and mutual defection, respectively, whereas $S_s$ is the payoff received by a cooperating agent interacting with a defector, and $T_s$ is the payoff received by a defecting agent interacting with a cooperator.

For the numerical experiments, the payoff matrix is assumed to vary linearly
with the environmental state:
\begin{equation}
	R(n) = (1-n)R^{(0)} + nR^{(1)}, \qquad n\in[0,1].
	\label{eq:payoff-interpolation}
\end{equation}

We adopt the following environmental feedback function:
\begin{equation}
	f\bigl(\bar{\mathbf{x}}(t);\theta\bigr) = \theta\,\bar{x}(C,t)-\bar{x}(D,t) = \theta\,\bar{x}(C,t) - \bigl[1-\bar{x}(C,t)\bigr],
	\label{eq:environmental-feedback}
\end{equation}
where $\bar{x}(C,t)$ and $\bar{x}(D,t)$ denote the population-average
probabilities of cooperation and defection, respectively. The parameter
$\theta>0$ controls the relative contribution of cooperation to environmental
change.

For a given initial environmental state $n_0$, the initial payoff matrix is
obtained by evaluating~\eqref{eq:payoff-interpolation} at $n=n_0$.

The common support of the initial $Q$-value distributions is defined as
\begin{equation*}
	\begin{aligned} Q_{\min} &= \max_{a\in\{C,D\}} \min_{b\in\{C,D\}} \big[R(n_0)\big]_{a,b}, \\
	Q_{\max} &= \min_{a\in\{C,D\}} \max_{b\in\{C,D\}} \big[R(n_0)\big]_{a,b}. \end{aligned}
\end{equation*}
For the payoff configurations considered below, $Q_{\min}\leq Q_{\max}$. The initial values $q_i(C,0)$ and $q_i(D,0)$ are independently sampled from Beta distributions with shape parameters $(\eta_1,\eta_2)$ and $(\eta_2,\eta_1)$, respectively, and then linearly scaled from $[0,1]$ to $[Q_{\min},Q_{\max}]$. Varying $(\eta_1,\eta_2)$ produces different initial learning biases.

The agent-based simulation and the numerical solution of the mean-field transport equation use consistent synchronous updates. At each learning round, the learning dynamics are evaluated using the current environmental state, while the environmental state is advanced by the projected update in~\eqref{eq:coupled-transport-environment} based on the current population-average strategy. One outer time step corresponds to one learning round. The complete finite-network update procedure is summarized in Algorithm~\ref{alg:finite-network-q-learning}. Unless otherwise specified, the numerical experiments use an inverse temperature of $\beta=2$, a learning rate of $\alpha=0.4$, a population size of $N=1000$, a feedback strength of $\theta=2$, and an environmental timescale of $\epsilon=10$.

Although the analytical formulation is stated for $k_i\geq1$, a small fraction of isolated nodes may occur in finite ER and RGG realizations. For such nodes, we assign a zero realized payoff and apply the same $Q$-learning update.

To construct the environmental payoff configurations considered below, we introduce the following Harmony and Prisoner's Dilemma payoff matrices:
\begin{equation}
	R^{\mathrm{H}} = \begin{pmatrix} 5 & 1\\
	3 & 0 \end{pmatrix}, \qquad R^{\mathrm{PD}} = \begin{pmatrix} 3 & 0\\
	5 & 1 \end{pmatrix}.
	\label{eq:benchmark-payoffs}
\end{equation}
Under the action ordering $(C,D)$, cooperation strictly dominates defection in $R^{\mathrm{H}}$. In contrast, defection strictly dominates cooperation in $R^{\mathrm{PD}}$, although mutual cooperation yields a higher payoff than mutual defection.

Based on the two payoff matrices in \eqref{eq:benchmark-payoffs}, we consider four environmental payoff configurations. Env. 1 and Env. 2 serve as static baselines in which the payoff matrix is independent of the environmental state, whereas Env. 3 and Env. 4 introduce opposite directions of environment-dependent payoff feedback.

\begin{enumerate}
	\item \textbf{Env. 1: Static Harmony Game.}
	Setting $R^{(0)}=R^{(1)}=R^{\mathrm{H}}$ removes environment-dependent payoff variation and maintains cooperation as the strictly dominant action. This configuration provides a cooperative baseline for evaluating the learning dynamics.
	
	\item \textbf{Env. 2: Static Prisoner's Dilemma.}
	Setting $R^{(0)}=R^{(1)}=R^{\mathrm{PD}}$ also removes environment-dependent payoff variation but makes defection the strictly dominant action. This configuration provides the corresponding defection-favoring baseline.
	
	\item \textbf{Env. 3: Compensatory Payoff Feedback.}
	Setting $R^{(0)}=R^{\mathrm{H}}$ and $R^{(1)}=R^{\mathrm{PD}}$ causes the game to shift from a Harmony game toward a Prisoner's Dilemma as the environmental state improves from $n=0$ to $n=1$. Environmental improvement therefore weakens the payoff incentive for cooperation, producing a compensatory feedback loop.
	
	\item \textbf{Env. 4: Reinforcing Payoff Feedback.}
	Setting $R^{(0)}=R^{\mathrm{PD}}$ and $R^{(1)}=R^{\mathrm{H}}$ reverses the preceding relationship. As the environmental state improves, the payoff structure becomes increasingly favorable to cooperation, thereby reinforcing the behavioral changes that support environmental improvement.
\end{enumerate}

\begin{figure*}[t]
	\centering
	\includegraphics[width=\textwidth]{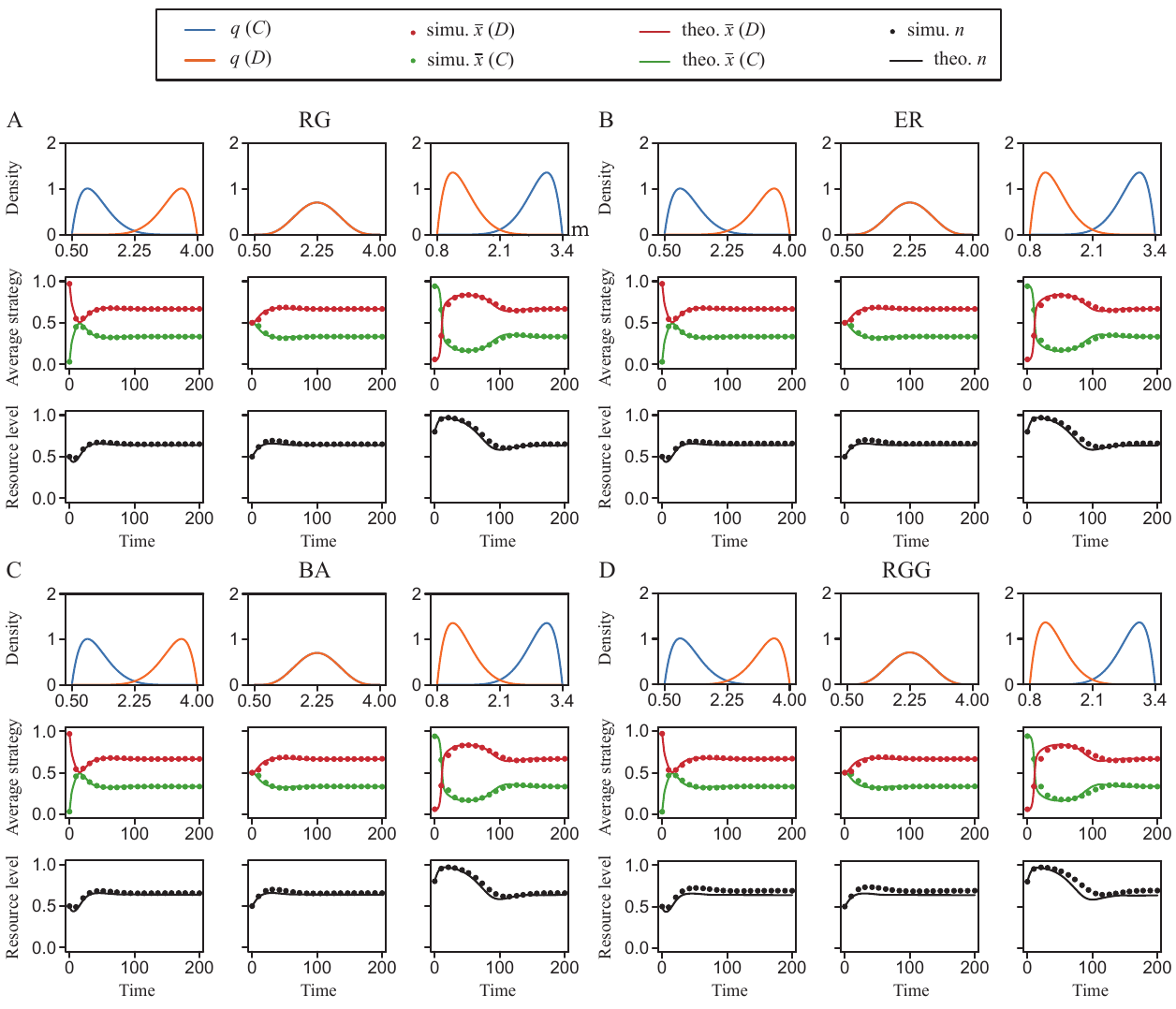}
	\caption{Comparison of agent-based simulations and mean-field transport predictions under Env. 3 across four network topologies. Panels A--D correspond to RG, ER, BA, and RGG networks, respectively. Within each panel, the three columns use $(\eta_1,\eta_2,n_0)=(2,8,0.5)$, $(5,5,0.5)$, and $(8,2,0.8)$, respectively. The first row shows the initial $Q$-value densities, the second row shows the population-average cooperation and defection probabilities, and the third row shows the environmental state $n(t)$. Markers denote averages over 100 independent agent-based runs, and solid lines denote the corresponding mean-field predictions.}
	\label{fig:network-comparison}
\end{figure*}

To assess the sensitivity of the mean-field approximation to interaction topology, we compare the agent-based and mean-field dynamics under Env. 3 on four representative networks, as shown in Fig.~\ref{fig:network-comparison}. The random regular graph (RG) provides a homogeneous-degree baseline, the Erd\H{o}s--R\'enyi (ER) network represents random connectivity, the Barab\'asi--Albert (BA) network introduces degree heterogeneity, and the random geometric graph (RGG) incorporates spatially localized interactions. The networks are generated with the same nominal mean degree $\langle k\rangle=4$, while the learning and environmental parameters are held fixed for each initial condition.

Under all three initial conditions, the cooperation probability, defection probability, and environmental state follow similar time courses on the four networks. Different initial $Q$-value distributions produce different changes during the early stage, but these differences gradually decrease, and the systems approach similar long-run levels. The mean-field curves generally follow the agent-based simulation results on the RG, ER, and BA networks. A more visible difference appears on the RGG network, particularly for the environmental state, which may result from the stronger local correlations induced by spatially constrained interactions. These results show that the mean-field model captures the overall evolution observed in the simulations, although it does not reproduce every finite-network trajectory exactly.

Because the tested networks display broadly similar population-level trends in Fig.~\ref{fig:network-comparison}, we use the RG network as a homogeneous baseline in the subsequent parameter studies where topology is not the primary variable. This choice helps isolate the effects of the learning and environmental parameters without introducing additional degree heterogeneity. The qualitative agreement observed in Fig.~\ref{fig:network-comparison} is quantified in Fig.~\ref{fig:rmse} by evaluating the trajectory-level RMSE over different population sizes and average degrees.

\begin{figure*}
	\centering
	\includegraphics[width=1\textwidth]{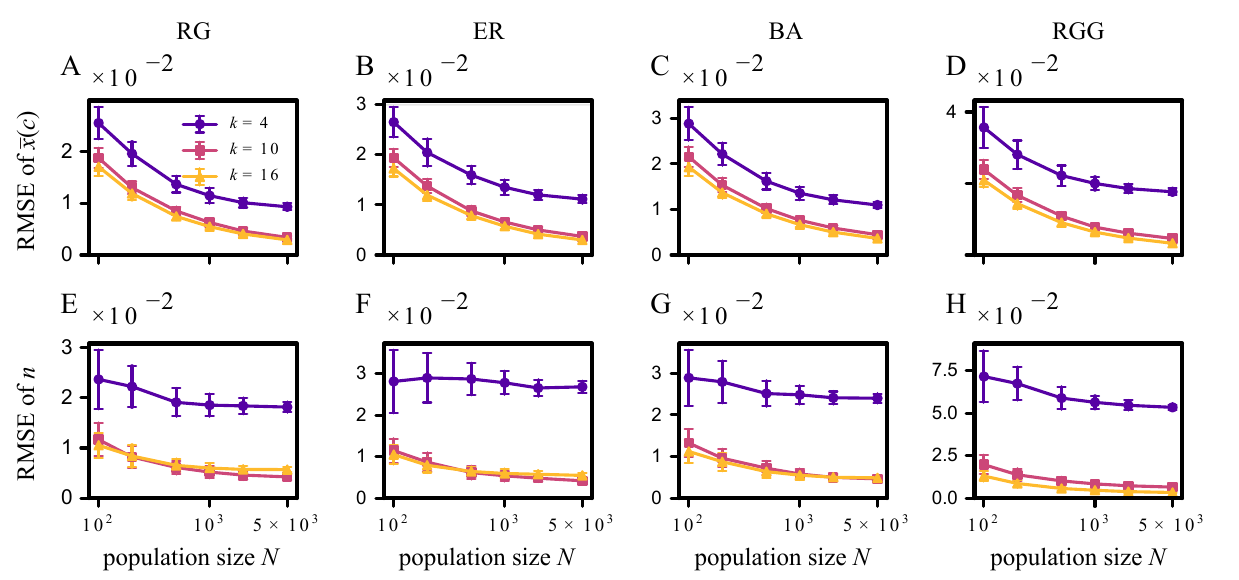}
	\caption{Trajectory-level root-mean-square errors between finite-network simulations and the mean-field transport prediction under Env.~3. Panels A--D show the RMSE of the population-average cooperation trajectory $\bar{x}(C,t)$, whereas Panels E--H show the RMSE of the environmental trajectory $n(t)$. The columns correspond to RG, ER, BA, and RGG networks, respectively. The horizontal axis denotes the population size $N$, and the curves correspond to average degrees $k=4$, $10$, and $16$. Each data point represents the mean RMSE over 100 simulations, and the error bars indicate the corresponding standard deviation.}
	\label{fig:rmse}
\end{figure*}

To quantify the discrepancy between the simulation and mean-field trajectories, we use the root-mean-square error (RMSE)
\begin{equation}
	\mathrm{RMSE}(y(t)) = \left[ \frac{1}{T+1} \sum_{t=0}^{T} \left( y_{\mathrm{sim}}(t)-y_{\mathrm{theo}}(t) \right)^2 \right]^{1/2},
\end{equation}
where $y(t)$ denotes either the population-average cooperation level $\bar{x}(C,t)$ or the environmental state $n(t)$.

We quantify the discrepancy between the finite-network simulations and the mean-field transport prediction using the trajectory-level RMSE defined above (Fig.~\ref{fig:rmse}). The upper row reports the RMSE of the population-average cooperation trajectory $\bar{x}(C,t)$, whereas the lower row reports the RMSE of the environmental trajectory $n(t)$. In both rows, the horizontal axis denotes the population size $N$, and the curves correspond to target mean degrees $\langle k\rangle=4$, $10$, and $16$. Each point is averaged over 100 simulations, and the error bars show the standard deviation of the RMSE across these simulations.

Across the four network topologies, both error measures generally decrease as $N$ increases. Increasing the average degree also reduces the discrepancy, with the largest errors consistently observed for $k=4$. This reduction is particularly clear for the cooperation trajectory, whereas the environmental RMSE exhibits small nonmonotonic variations for some network--degree combinations. Nevertheless, the RMSE remains on the order of $10^{-2}$ throughout the tested parameter range, and similar dependencies on $N$ and $k$ are observed for RG, ER, BA, and RGG networks. These results are consistent with finite-population fluctuations and limited neighborhood sampling contributing to the deviation from the mean-field prediction. The decreasing error with increasing $N$ and $k$ supports the use of the transport equation as an approximation of the macroscopic learning--environment dynamics within the tested network and parameter ranges.

\begin{figure*}[t]
	\centering
	\includegraphics[width=\textwidth]{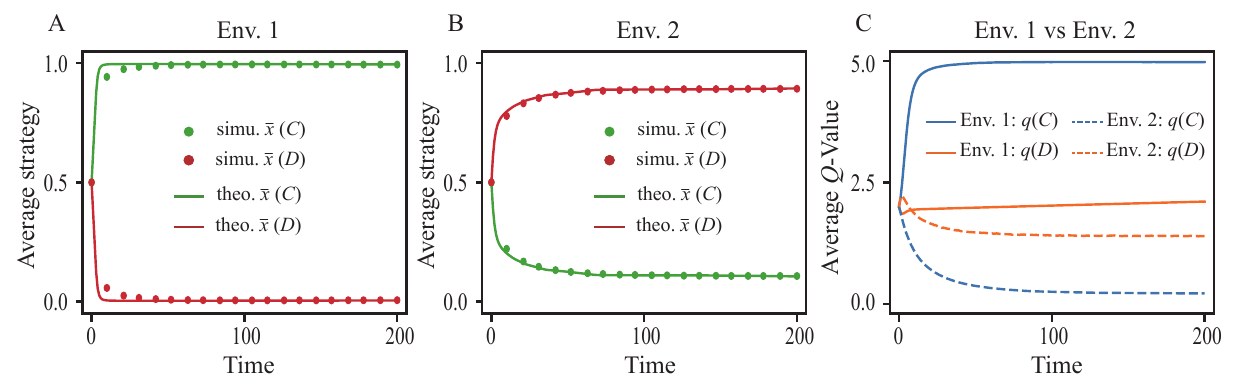}
	\caption{Population-level learning dynamics on a BA network under the two static payoff configurations. Panels A and B compare the population-average cooperation and defection probabilities obtained from Monte Carlo (MC) simulations and the mean-field (MF) model under Env. 1 and Env. 2, respectively. Panel C shows the MC population-average values of $q(C)$ and $q(D)$, with solid and dashed curves corresponding to Env. 1 and Env. 2, respectively.}
	\label{fig:static-games}
\end{figure*}

Having established the accuracy of the mean-field approximation across representative network structures and finite population sizes, we next examine the learning dynamics under different payoff configurations and environmental parameters.

Env. 1 and Env. 2 provide two static payoff baselines. Because $R^{(0)}=R^{(1)}$ in both configurations, the payoff matrix is independent of the environmental state. The feedback channel from the environment to the learning dynamics is therefore removed, although the population strategy may still affect the environmental state. These two configurations isolate the learning behavior under a fixed Harmony game and a fixed Prisoner's Dilemma, respectively.

As shown in Fig.~\ref{fig:static-games}A, cooperation is strictly dominant under Env. 1, and the population-average cooperation probability rapidly approaches a value close to one. Under Env. 2, defection is strictly dominant, and Fig.~\ref{fig:static-games}B shows that the population evolves toward a defection-dominated state while retaining a small but nonzero probability of cooperation. In both cases, the mean-field predictions closely track the Monte Carlo results during the initial change and at the subsequent stationary level.

Panel C explains why the outcomes under the two static games are not exact mirror images. Under Env. 1, the learned value of cooperation increases toward a substantially higher level than that of defection. Under Env. 2, the learned value of defection remains higher than that of cooperation, but the difference between the two values is smaller. Because a finite inverse temperature $\beta$ assigns a positive probability to every action, the system does not reach the absorbing states of exact full cooperation or exact full defection. Instead, the difference between $q(C)$ and $q(D)$ determines the strength of the softmax preference for the dominant action.

\begin{figure*}[t]
	\centering
	\includegraphics[width=\textwidth]{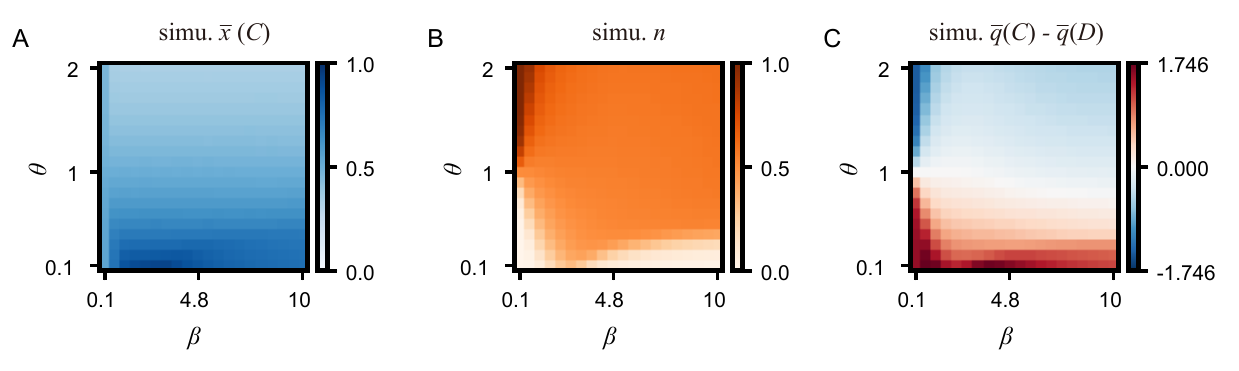}
	\caption{Late-time Monte Carlo estimates under Env. 3 in the $(\beta,\theta)$ parameter plane. Panels A--C show the population-average cooperation probability $\bar{x}(C)$, the environmental state $n$, and the population-average action-value difference $\bar{q}(C)-\bar{q}(D)$, respectively.}
	\label{fig:parameter-plane}
\end{figure*}
To examine how policy selectivity and environmental feedback jointly affect the coupled dynamics, we vary the inverse temperature over $\beta\in[0.1,10]$ and the feedback parameter over $\theta\in[0.1,2]$. Fig.~\ref{fig:parameter-plane} reports the resulting late-time averages under Env. 3. The three panels describe the behavioral, environmental, and action-value responses to the same parameter combinations.

As shown in Fig.~\ref{fig:parameter-plane}A, the cooperation probability is close to $0.5$ when $\beta$ is small because the softmax policy assigns similar probabilities to the two actions. As $\beta$ increases, the dependence on $\theta$ becomes more pronounced: cooperation remains high at small $\theta$ but decreases as $\theta$ increases. Under Env. 3, a larger $\theta$ allows cooperation to improve the environmental state more readily. The resulting increase in $n$, however, shifts the payoff matrix from the Harmony game toward the Prisoner's Dilemma, thereby weakening the learned incentive for cooperation.

Panel B shows that the environmental state generally increases with $\theta$. At larger $\theta$, the environmental state remains high across most of the investigated $\beta$ range because the cooperation level required for positive environmental growth, $1/(1+\theta)$, is lower. At smaller $\theta$, the environmental outcome depends non-monotonically on $\beta$: intermediate policy selectivity produces a higher environmental level than either nearly random action selection or strongly selective behavior. 

As shown in Fig.~\ref{fig:parameter-plane}C, the population-average action-value difference is positive throughout most of the low-$\theta$ region, indicating a higher learned value for cooperation. As $\theta$ increases, $\bar{q}(C)-\bar{q}(D)$ decreases and changes sign near the central part of the investigated $\theta$ range. The nearly horizontal zero-value boundary indicates that $\theta$ primarily governs this reversal in action-value ordering, whereas $\beta$ modulates the magnitude of the value difference on either side of the boundary. 

\begin{figure*}
	\centering
	\includegraphics[width=1\textwidth]{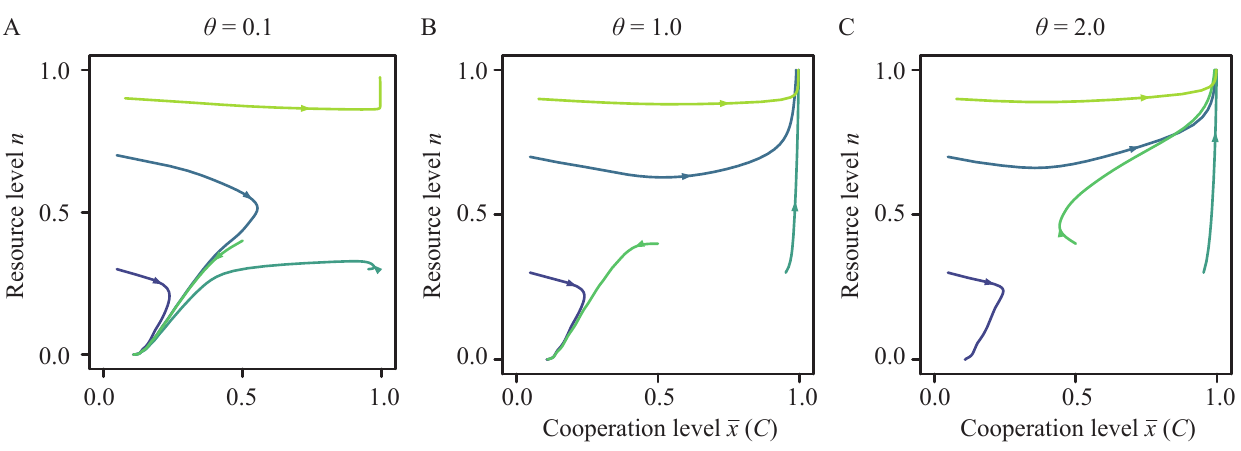}
	\caption{Projected mean-field trajectories under reinforcing payoff feedback (Env.~4) in the population-average cooperation--environment plane $(\bar{x}(C),n)$. Panels A--C correspond to $\theta=0.1$, $1.0$, and $2.0$, respectively, and arrows indicate the direction of increasing time.}
	\label{fig:initial-condition-trajectories}
\end{figure*}

To examine the sensitivity of the coupled dynamics to the initial learning bias and resource level, Fig.~\ref{fig:initial-condition-trajectories} presents projections of the mean-field trajectories onto the $(\bar{x}(C),n)$ plane under Env.~4. The trajectories exhibit pronounced initial-condition dependence. Initial states with both low cooperation and limited environmental resources evolve toward a resource-depleted, low-cooperation regime. By contrast, sufficiently favorable initial learning biases or resource levels can direct the system toward a resource-rich regime with a high cooperation probability. This separation arises from the reinforcing feedback in Env.~4: a favorable environment increases the relative payoff of cooperation, while increased cooperation subsequently promotes environmental recovery.

The influence of $\theta$ can be understood from the environmental growth condition
\begin{equation*}
	\bar{x}(C,t)>\frac{1}{1+\theta}.
\end{equation*}
The corresponding cooperation thresholds are approximately $0.909$, $0.5$, and $0.333$ for Panels A--C, respectively. Consequently, increasing $\theta$ allows a broader set of trajectories to enter the region of positive environmental growth. Nevertheless, the low-cooperation, low-resource trajectory still approaches environmental depletion even at $\theta=2$, showing that stronger feedback does not necessarily overcome an unfavorable early-stage evolution.

\begin{figure*}
	\centering
	\includegraphics[width=1\textwidth]{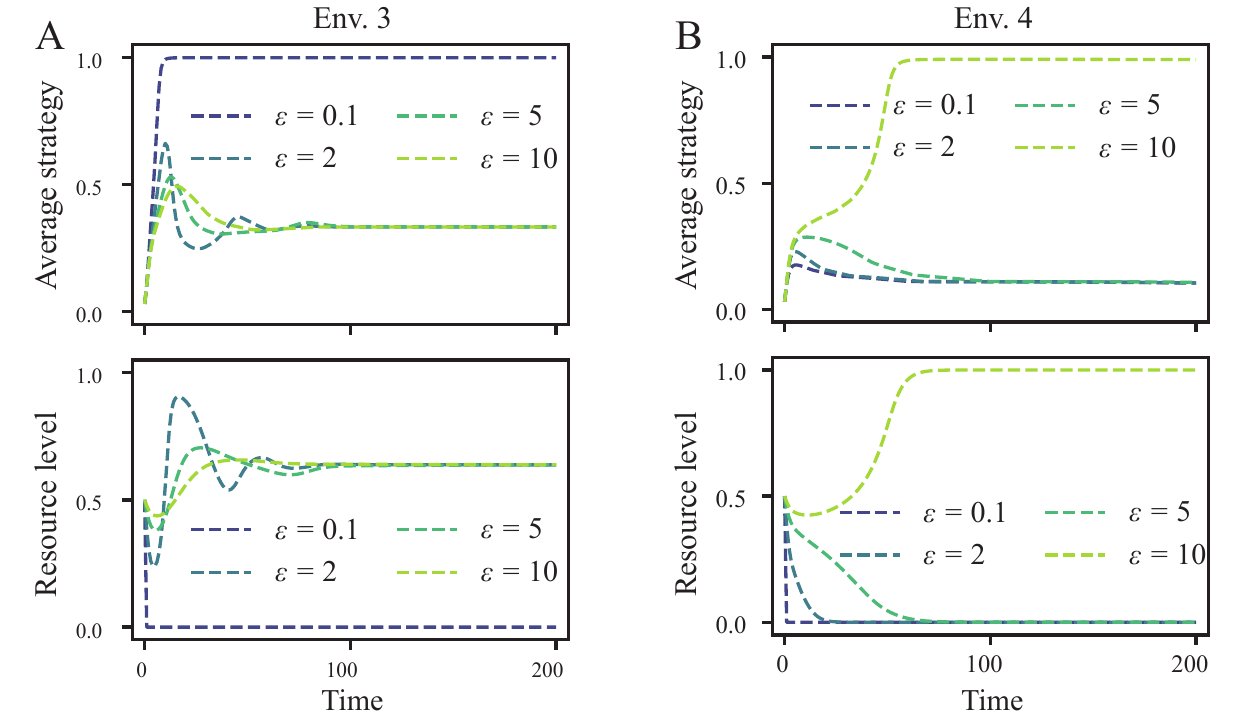}
	\caption{Mean-field transport predictions under different environmental response timescales. Panel A corresponds to Env.~3, and Panel B corresponds to Env.~4. In each panel, the upper and lower subplots show the population-average cooperation level $\bar{x}(C,t)$ and the environmental state $n(t)$, respectively.}
	\label{fig:timescale-effects}
\end{figure*}

We next examine the impact of the environmental response timescale $\epsilon$ on the system's evolutionary behavior (Fig.~\ref{fig:timescale-effects}). In all calculations, the initial cooperation level is set to a low value to emphasize the role of the environmental response during the early transient stage.

Under the Env.~3 payoff structure, a relatively small $\epsilon$ produces a rapid environmental response to the initially low cooperation level. Specifically, for $\epsilon=0.1$, the resource loss during the first learning round is sufficient for the projected environmental update to reach $n_1=0$ before the learning dynamics can substantially adjust. Because $n=0$ is an absorbing boundary of the projected update, the resource remains depleted thereafter, while cooperation subsequently increases under the cooperation-favorable payoff configuration. For $\epsilon=2,5,$ and $10$, the slower environmental response allows the learning and environmental processes to interact over a longer period, producing transient excursions before the system approaches an intermediate long-time state (Fig.~\ref{fig:timescale-effects}A).

Under the Env.~4 payoff structure, a small $\epsilon$ similarly causes the resource state to decline rapidly toward $n=0$. For $\epsilon=0.1,2,$ and $5$, the resulting depleted environment corresponds to a defection-favorable payoff configuration, and the population ultimately remains at a low cooperation level (Fig.~\ref{fig:timescale-effects}B). In contrast, when $\epsilon=10$, the slower environmental response provides a sufficient time window for cooperation to increase before the resource state reaches the depleted boundary. The resulting positive feedback drives the environmental state toward its upper boundary and leads to a highly cooperative long-time state. These results demonstrate that a small $\epsilon$ can cause the environment to reach and remain at the depleted boundary, whereas a sufficiently large $\epsilon$ allows the learning and environmental processes to remain coupled and can lead to a different outcome.

\section{Conclusion}

In this paper, we developed a coupled learning--environment model that combines stateless multi-agent $Q$-learning on fixed graphs with endogenous environmental feedback. The model describes how agents update action values from neighbor-averaged rewards, how the population-average strategy changes the environmental state, and how the resulting environmental state modifies future payoffs. Under a first-order mean-field closure, we derived a deterministic transport equation for the population distribution of $Q$-values and coupled it with a projected environmental update defined on the discrete learning timescale.

The comparison with finite-network Monte Carlo simulations on RG, ER, BA, and RGG networks shows that the mean-field system captures the main macroscopic cooperation and environmental trajectories over the tested parameter ranges. The trajectory-level RMSE is generally of order $10^{-2}$ and decreases as the population size and average degree increase. This trend indicates that finite-population fluctuations and limited neighborhood sampling are important sources of the deviation between the finite-network dynamics and the population-level approximation.

The results also clarify how environmental feedback shapes collective learning. Under the compensatory payoff configuration, the feedback parameter changes the learned action-value ordering and the resulting cooperation level. Under the reinforcing payoff configuration, different initial learning biases and environmental states can lead to distinct long-run outcomes, demonstrating pronounced path dependence. The environmental timescale provides an additional regulating factor: when the response is too rapid, the resource state can reach the depleted boundary before the learning process adapts, whereas a slower response allows behavioral adjustment and environmental recovery to remain coupled.

The present framework relies on a mean-field closure in which neighboring actions are approximated by population-level strategy statistics. Accordingly, its predictive accuracy may decrease in sparse networks with strong local correlations, community structure, or assortative mixing. The numerical analysis in this paper focuses on fixed networks, binary-action games, and a linear environmental feedback function. These choices specify the benchmark scenarios considered in the numerical experiments, rather than fundamental restrictions of the theoretical framework. The formulation is written for a general finite action set, and the transport equation describes the deterministic mean-drift dynamics of the learning process. More general payoff--environment mappings and nonlinear feedback functions can therefore be incorporated by replacing the corresponding model terms. Future work will investigate these extensions together with state-dependent learning and adaptive or co-evolving networks.


\bibliographystyle{IEEEtran}
\bibliography{references}

\newpage

 




\vfill

\end{document}